\documentclass[letterpaper, 10pt, conference]{ieeeconf}

\IEEEoverridecommandlockouts
\overrideIEEEmargins

\usepackage[nocompress]{cite}
\usepackage{amsmath}
\usepackage{amsfonts}
\usepackage{xcolor}
\usepackage{pgfplots}
\pgfplotsset{compat=1.12}

\newtheorem{definition}{Definition}
\newtheorem{proposition}{Proposition}
\newtheorem{theorem}{Theorem}
\newtheorem{remark}{Remark}

\newtheorem{example}{Example}

\renewcommand{\epsilon}{\varepsilon}
\renewcommand{\theta}{\vartheta}

\newcommand*{\tran}{^{\mkern-1.5mu\mathsf{T}}}

\newcommand{\dd}[2]{\frac{\partial{#1}}{\partial{#2}}}


\IEEEtriggeratref{9}

\begin{document}

\title{\LARGE \bf Control Contraction Metrics on Finsler Manifolds}

\author{Thomas L. Chaffey$^1$ and Ian R. Manchester$^1$%
\thanks{$^1$The authors are with the Australian Centre for Field Robotics (ACFR),
University of Sydney, NSW 2006, Australia. Emails: \texttt{tcha4856@uni.sydney.edu.au},
\texttt{i.manchester@acfr.usyd.edu.au}.}}

\maketitle

\begin{abstract}
	Control Contraction Metrics (CCMs) provide a nonlinear controller design
	involving an offline search for a Riemannian metric and an online search for
	a shortest path between the current and desired trajectories.  In this paper,
	we generalize CCMs to Finsler geometry, allowing the use of non-Riemannian
	metrics.  We provide open loop and sampled data controllers.  The sampled
	data control construction presented here does not require real time
	computation of globally shortest paths, simplifying computation.
\end{abstract}

\section{Introduction}\label{intro}

Control synthesis for general nonlinear systems remains a challenging problem, and no
one technique is recognised as universally applicable \cite{Isidori1995, Krstic1995,
Sepulchre1997, Kokotovic2001}.  Two popular classes
of solution are explicit control constructions based on classical Lyapunov theory and
model predictive techniques involving real time optimisation.

A Lyapunov function characterizes
the stability of a system and is related to the intuitive idea of energy decaying
in stable systems.  A control Lyapunov function (CLF) is necessary and sufficient for
controllability of a system \cite{Artstein1983, Sontag1983}, and for large classes of
systems (those affine in controls), the construction of a controller given a CLF is 
simple \cite{Sontag1989}.  However, control Lyapunov functions are in general 
difficult to find \cite{Rantzer2001}.

The Control Contraction Metric (CCM) method of control synthesis, introduced by Manchester
and Slotine \cite{Manchester2017}, simplifies the search for a Lyapunov function.  
Rather than explicitly search for a Lyapunov function, a
convex search is performed for a CCM which measures distance between trajectories.
The CCM may be thought of as inducing an infinite family of local Lyapunov functions.
Online computation involves a search for a minimal path and integration of a local,
differential control law along this minimal path.  CCM controllers have more
efficient online computation than nonlinear Model Predictive Control
\cite{Leung2017}, and have been applied in several application areas, including
mechanical systems \cite{manchester2018unifying}, decentralized and distributed
control \cite{Shiromoto2016, Wang2017, Manchester2017a} and collision-free motion planning
\cite{singh2017robust}.

CCMs are based closely on contraction analysis, introduced by Lohmiller and Slotine
\cite{Lohmiller1998}.   The central idea is 
that if all nearby trajectories converge to each other, then all trajectories
converge to one nominal trajectory and the system is stable. The idea that global
properties can be inferred from the local behaviour of trajectories does away with
the need to construct global functions.  

CCMs are inherently Riemannian - the Lyapunov function locally induced by a CCM is a
Riemannian metric, similar to a traditional global quadratic Lyapunov function.  
The restriction to Riemannian metrics precludes the use of certain desirable 
non-Riemannian Lyapunov functions, for
example $p$-norms with $p \neq 2$ and consensus algorithms based upon the Hilbert
projection metric \cite{Sepulchre2010}.

Contracting systems were first considered in terms of non-Riemannian Finsler metrics
by Lewis \cite{Lewis1949}.
Recent work on contraction analysis has also made use of non-Riemannian metrics, to identify
system properties \cite{Sontag2010, Aminzare2013}, analyse a wider class of systems
\cite{Russo2013, Fiore2016} and generalize 
contraction ideas to allow different behaviours \cite{Margaliot2016}. 

Forni and Sepulchre have recently provided a framework for contraction analysis that
encompasses approaches using Riemannian metrics and approaches using other metrics
through a generalization to Finsler geometry \cite{Forni2014a}.  Furthermore, they
have taken the first steps towards unifying contraction analysis and traditional
Lyapnuov theory, in an effort to make the powerful tools of Lyapnuov theory available
in contraction analysis.

The primary contribution of this paper is a generalization of CCMs to Finsler manifolds, 
unifying the frameworks of Manchester and Slotine \cite{Manchester2017} and Forni and 
Sepulchre \cite{Forni2014a}.  This removes the restriction that metrics 
are Riemannian, allowing CCM controllers to be applied to a larger class of systems.
Further, we provide a sampled data control construction that does not require the
computation of minimal paths between trajectories.  This controller allows the CCM
method to be applied in cases where either a minimal path does not exist, or it is
too costly to compute a minimal path online.

The remainder of this paper is structured as follows.  In Section~\ref{notation}, we
present the notation used throughout and review some fundamental definitions and
results.  In Section~\ref{contraction}, we provide a differential characterization of
stable closed loop systems.  In Section~\ref{open}, we state and prove the
fundamental result of this paper: a generalization of the open loop controller given
by Manchester and Slotine \cite{Manchester2017} to Finsler manifolds.  We use this open 
loop controller to construct several sampled data, closed loop controllers in 
Section~\ref{closed}. Concluding remarks are made in Section~\ref{conclusions}.

\section{Notation and Preliminaries}\label{notation}

In this section, we introduce the notation used throughout the paper and review
several important results and definitions that underlie the work presented in the
following sections.

We adopt the notation of \cite{Forni2014a} and \cite{Manchester2017}.
A manifold $\mathcal{M}$ is a couple $(\mathcal{M}, \mathcal{A})$ where
$\mathcal{M}$ is a set and $\mathcal{A}$ is a maximal atlas of $\mathcal{M}$ that
induces a Hausdorff, second countable topology.  The tangent space at $x$ and tangent 
bundle for $\mathcal{M}$ are denoted by $\mathcal{T}_x\mathcal{M}$ and
$\mathcal{TM}$ respectively.  We denote by $\mathbb{R}^+$ the set $\{x \in \mathbb{R}
: x \geq 0\}$.

This paper considers control-affine nonlinear dynamical systems defined over a
manifold $\mathcal{M}$ of dimension $n$. These take the form
\begin{IEEEeqnarray}{rCl}
	\dot{x} &=& f(x) + B(x)\,u \label{dynamics}
\end{IEEEeqnarray}
where $f$ is a $C^1$ vector field that maps $x,t \in \mathcal{M} \times
\mathbb{R}$ to vectors in $\mathcal{T}_x\mathcal{M}$, $B$ is a smooth
function and $u \in \mathbb{R}^n$.
A trajectory is a couple $(x, u)$, $x : \mathbb{R}^+ \rightarrow \mathcal{M}$, $u :
\mathbb{R}^+ \rightarrow \mathbb{R}^n$,
such that $x$ and $u$ satisfy (\ref{dynamics}) for all $t \in \mathbb{R}^+$.
Analysis of the differential dynamics, which characterize the linearization of the
system~(\ref{dynamics}) along trajectories, yield the results of this paper.  The
differential dynamics are given by
\begin{IEEEeqnarray}{rCl}
	\dot{\delta_x} &=& A(x, u)\,\delta_x + B(x)\,\delta_u,
	\label{differential}
\end{IEEEeqnarray}
where $A = \dd{f}{x} + \sum_{i=1}^n \dd{b_i}{x} u_i$, $b_i$ represents the
$i^\text{th}$
column of $B$ and $u_i$ represents the $i^\text{th}$ element of $u$.

Several classes of real functions are referred to in the definitions that follow.
A class $\mathcal{K}$ function $\alpha$ is a locally Lipschitz function 
$\alpha: \mathbb{R}^+ \rightarrow \mathbb{R}^+$ which is strictly increasing with
$\alpha(0) = 0$.  A function $\beta: \mathbb{R}^+
\times \mathbb{R}^+ \rightarrow \mathbb{R}^+$ belongs to class $\mathcal{KL}$  if, for all $t \geq 0$,
$\beta(\cdot, t)$ is a class $\mathcal{K}$ function and for all $s \geq 0$, $\beta(s,
\cdot)$ is nonincreasing and $\lim_{t\rightarrow\infty}\beta(s, t) = 0$.

Forni and Sepulchre \cite{Forni2014a} characterize stability of systems on manifold
in the following definition.
\begin{definition}[Incremental stability \cite{Forni2014a}]
	Consider the system $\dot{x} = f(x, t)$ on a manifold 
	$\mathcal{M}$.  Let $d:
	\mathcal{M} \times \mathcal{M} \rightarrow \mathbb{R}^+$ be a continuous
	distance on $\mathcal{M}$. The system is
	\begin{itemize}
		\item \emph{incrementally stable} on $\mathcal{M}$ with 
			respect to $d$
			if there exists a class $\mathcal{K}$ function $\alpha$ such
			that, for any two trajectories $x_1(t), x_2(t) : \mathbb{R}
			\rightarrow \mathcal{M}$, for all $t \geq t_i$,
			\begin{IEEEeqnarray*}{rCl}
				d(x_1(t), x_2(t)) \leq \alpha(d(x_1(0), x_2(0))).
			\end{IEEEeqnarray*}
		\item \emph{incrementally asymptotically stable} on
			$\mathcal{M}$ if it is incrementally stable and, for any two
			trajectories $x_1(t), x_2(t) : \mathbb{R} \rightarrow 
			\mathcal{M}$,
			\begin{IEEEeqnarray*}{rCl}
				\lim_{t\rightarrow \infty} d(x_1(t), x_2(t)) = 0.
			\end{IEEEeqnarray*}
		\item \emph{incrementally exponentially stable} on
			$\mathcal{M}$ if there exists $K \geq 1$ and $\lambda > 0$ such
			that, for any two trajectories $x_1(t), x_2(t) : \mathbb{R}
			\rightarrow \mathcal{M}$, for all $t \geq t_i$,
			\begin{IEEEeqnarray*}{rCl}
				d(x_1(t), x_2(t)) \leq K\,e^{-\lambda(t-t_i)}
				d(x_1(t_i), x_2(t_i)).
			\end{IEEEeqnarray*}
	\end{itemize}
\end{definition}

We extend these definitions to systems with control input, considering whether or not
the system can be made to converge to a particular trajectory.
\begin{definition}[Stabilizability/controllability]
	Consider the system (\ref{dynamics}) on a manifold 
	$\mathcal{M}$.  Let $d:
	\mathcal{M} \times \mathcal{M} \rightarrow \mathbb{R}^+$ be a continuous
	distance on $\mathcal{M}$. A trajectory $x^* : \mathbb{R} \rightarrow
	\mathcal{M}$ of the system (\ref{dynamics}) is
	\begin{itemize}
		\item \emph{controllable (stabilizable)} on $\mathcal{M}$ with 
			respect to $d$
			if there exists an open loop (resp. closed loop) control
			signal and a class $\mathcal{K}$ function $\alpha$ such
			that, for any trajectory $x(t): \mathbb{R}
			\rightarrow \mathcal{M}$, for all $t \geq t_i$,
			\begin{IEEEeqnarray*}{rCl}
				d(x^*(t), x(t)) \leq \alpha(d(x^*, x)).
			\end{IEEEeqnarray*}
		\item \emph{asymptotically controllable
			(stabilizable)} on
			$\mathcal{M}$ if there exists an open loop (resp. closed
			loop) control signal such that it is incrementally stable and, for any
			trajectory $x(t) : \mathbb{R} \rightarrow \mathcal{M}$,
			\begin{IEEEeqnarray*}{rCl}
				\lim_{t\rightarrow \infty} d(x^*(t), x(t)) = 0.
			\end{IEEEeqnarray*}
		\item \emph{exponentially controllable
			(stabilizable)} on
			$\mathcal{M}$ if there exists an open loop (resp. closed
			loop) control signal, $K \geq 1$ and $\lambda > 0$ such
			that, for any trajectory $x^*(t) : \mathbb{R}
			\rightarrow \mathcal{M}$, for all $t \geq t_i$,
			\begin{IEEEeqnarray*}{rCl}
				d(x^*(t), x(t)) \leq K\,e^{-\lambda(t-t_i)}
				d(x^*(t_i), x(t_i)).
			\end{IEEEeqnarray*}
	\end{itemize}
	All three types of controllability (stabilizability) are termed
	\emph{universal} if any choice of trajectory $x^*$ is controllable
	(stabilizable).
\end{definition}

Classical Lyapunov theory and its recent extension to contraction analysis
\cite{Forni2014a} tells us that the existence of a (Finsler-) Lyapunov function is
equivalent to stability of a system. A candidate Finsler-Layapunov function is 
defined as follows.
\begin{definition}[Finsler-Lyapunov function \cite{Forni2014a}]
	Consider a manifold $\mathcal{M}$ and a $C^1$ function
	$V: \mathcal{TM}\rightarrow \mathbb{R}^+; (x, \delta_x) \mapsto V(x,
	\delta_x)$. $V$ is a \emph{Finsler-Lyapunov function for (\ref{dynamics})} if
	there exist $c_1, c_2, p \in \mathbb{R}$, $c_1, c_2 \geq 0$, $p \geq 1$ and a
	Finsler structure $F: \mathcal{TM}\rightarrow \mathbb{R}^+$ such that, for
	all $(x, \delta_x) \in \mathcal{TM}$,
	\begin{IEEEeqnarray}
		cc_1\,F(x,\delta_x)^p \leq V(x, \delta_x) \leq c_2\,F(x, \delta_x)^p.
		\label{bounds}
	\end{IEEEeqnarray}
	$F$ satisfies:
	\begin{enumerate}
		\item $F$ is a $C^1$ function for every $(x, \delta_x) \in
			\mathcal{TM}$ such that $\delta_x \neq 0$;
		\item $F(x, \delta_x) > 0$ for each $(x, \delta_x) \in \mathcal{TM}$
			such that $\delta_x \neq 0$;
		\item $F(x, \lambda\delta_x) = \lambda\,F(x, \delta_x)$ for every
			$\lambda \geq 0$ and every $(x, \delta_x) \in \mathcal{TM}$;
		\item $F(x, \delta_{x1} + \delta_{x2}) < F(x, \delta_{x1}) + F(x,
			\delta_{x2})$ for every $(x, \delta_{x1}),\;(x, \delta_{x2})
			\in \mathcal{TM}$ such that $\delta_{x1} \neq
			\lambda\delta_{x2}$ for any $\lambda \in \mathbb{R}$.
	\end{enumerate}
	We refer to a manifold $\mathcal{M}$ endowed with a Finsler structure $F$ as
	a \emph{Finsler manifold}.
\end{definition}

A key result of Forni and Sepulchre \cite{Forni2014a} is that, for a system with no 
control input, if a 
candidate Finsler-Lyapunov function $V$ can be found such that
\begin{IEEEeqnarray*}{rCl}
	\dot{V} = \dd{V}{x}\,\dot{x} + \dd{V}{\delta_x}\,\dot{\delta_x} &\leq&
	-\alpha(V),
\end{IEEEeqnarray*}
then the system is incrementally stable, with the type of stability depending on the
form of $\alpha$.  In this case, $V$ is called a \emph{contraction measure}.

The Finsler structure $F$ endows the manifold $\mathcal{M}$ with a global measure of
distance.  Before defining this distance, we introduce some necessary notation.
A curve $\gamma$ on a manifold $\mathcal{M}$ is a function $\gamma : I \subset
\mathbb{R} \rightarrow \mathcal{M}$.  We denote $\partial \gamma/\partial s$ by
$\gamma_s$. A curve is regular if $\gamma_s \neq 0$
for all $s$.  The space $\Gamma(x_1, x_2)$ is defined as the set of all curves
$\gamma: [0, 1] \rightarrow \mathcal{M}$ such that $\gamma(0) = x_1$ and $\gamma(1) =
x_2$.

\begin{definition}[Finsler distance]
	Given a candidate Finsler-Lyapunov function $V$ with Finsler structure $F$
	and defining $I = [0, 1]$, the distance $d : \mathcal{M} \times \mathcal{M} \rightarrow \mathbb{R}^+$ is
	given by
	\begin{IEEEeqnarray}{rCl}
		d(x_1, x_2) = \inf_{\Gamma(x_1, x_2)} \int_I F(\gamma(s),
		\gamma_s(s))\mathrm{d}s. \label{finsler_distance}
	\end{IEEEeqnarray}
\end{definition}

Note that in general, $d$ is not symmetric, that is, $d(x_1, x_2) \neq d(x_2, x_1)$.
$d$ is however positive definite and satisfies the triangle inequality.  
If $\int_I F(\gamma(s), \dot\gamma(s))\mathrm{d}s = d(x_1, x_2)$, $\gamma$ is a
minimising geodesic. The Finsler manifold $\mathcal{M}$ is said to be forward
geodesically complete if every geodesic $\gamma(s)$ defined on $s \in [a, b)$ can be
extended to a geodesic defined on $s \in [a, \infty)$.  The Hopf-Rinow Theorem states
that any two points $x_1, x_2 \in \mathcal{M}$ can be connected by a minimising
geodesic if $\mathcal{M}$ is forward geodesically complete.  We refer the
reader to Bao et al. \cite{Bao2000} for a full treatment of the Hopf-Rinow theorem and geodesics
on Finsler manifolds.

For notational convenience, this paper only considers time invariant $f$, $B$ and
$V$.  However, the results extend in a straightforward manner to the time varying
case.

\section{Contraction of Systems with Control Inputs}\label{contraction}

We begin in a similar manner to \cite[Prop. 1]{Manchester2017}, by examining 
conditions on a contraction measure $V$ for a system with
control inputs, when a controller has been found that makes the system contract while
allowing all solutions of the dynamics (\ref{dynamics}) to remain feasible.

\begin{proposition}\label{PROP1}
	Suppose that, for the system (\ref{dynamics}) on a smooth manifold
	$\mathcal{M}$, there exists a smooth feedback control of the form $u = k(x,
	t) + v$ such that there exists a candidate Finsler-Lyapunov
	function $V$ (with Finsler structure $F$) that gives
	\begin{IEEEeqnarray}{rCl}
		\dot{V} = \dd{V}{x}\,\dot{x} + \dd{V}{\delta_x}\,\dot{\delta_x} &\leq&
		-\alpha(V) \label{contract}
	\end{IEEEeqnarray}
	for every $t\in\mathbb{R}$, $x\in\mathcal{M}$, $\delta_x \in
	\mathcal{T}_x\mathcal{M}$ and $v \in \mathbb{R}^n$ and
	some $\alpha:\mathbb{R}^+\rightarrow\mathbb{R}^+$. Then, for all $\delta_x
	\neq 0$,
	\begin{IEEEeqnarray}{rCl}
		\dd{V}{\delta_x}B &=& 0 \implies \dd{V}{x}\left(f + Bu\right) +
		\dd{V}{\delta_x}A\delta_x \leq -\alpha(V). \label{prop1}
	\end{IEEEeqnarray}
\end{proposition}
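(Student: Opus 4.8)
The plan is to establish the implication directly, by expanding the total derivative $\dot V$ along the closed-loop differential dynamics and showing that the antecedent $\dd{V}{\delta_x}B = 0$ annihilates precisely the single term of $\dot V$ that carries the differential control. First I would substitute the dynamics~(\ref{dynamics}) and the differential dynamics~(\ref{differential}) into $\dot V = \dd{V}{x}\dot x + \dd{V}{\delta_x}\dot{\delta_x}$, obtaining
\[
	\dot V = \dd{V}{x}(f + Bu) + \dd{V}{\delta_x}(A\delta_x + B\delta_u)
	= \left[\dd{V}{x}(f + Bu) + \dd{V}{\delta_x}A\delta_x\right] + \dd{V}{\delta_x}B\,\delta_u.
\]
The bracketed quantity is exactly the left-hand side of the claimed inequality~(\ref{prop1}), so the whole argument reduces to disposing of the residual term $\dd{V}{\delta_x}B\,\delta_u$.

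The key observation is that $\dd{V}{\delta_x}$ is a covector (a $1\times n$ row), so $\dd{V}{\delta_x}B$ is itself a row vector, and the antecedent $\dd{V}{\delta_x}B = 0$ asserts that this entire covector vanishes. Hence $\dd{V}{\delta_x}B\,\delta_u = 0$ for \emph{every} differential input $\delta_u$ induced by the feedback $u = k(x,t)+v$, regardless of the value of $\delta_u$. At any point where the antecedent holds we therefore have $\dot V = \dd{V}{x}(f + Bu) + \dd{V}{\delta_x}A\delta_x$.

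Finally I would invoke the standing hypothesis that $\dot V \leq -\alpha(V)$ for every $t$, $x$, $\delta_x$ and $v$. Since the left-hand side of~(\ref{prop1}) now coincides with $\dot V$ wherever $\dd{V}{\delta_x}B = 0$, the inequality~(\ref{prop1}) follows immediately for the given $v$, and hence for all $v$.

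The step I expect to demand the most care is interpretive rather than computational: one must keep $\delta_u$ tied to the fixed feedback $k$ while letting the feedforward $v$ range over $\mathbb{R}^n$. It would be a mistake to read the contraction hypothesis as holding for all \emph{free} $\delta_u$, since that would force $\dd{V}{\delta_x}B = 0$ everywhere and render the antecedent vacuous. The correct reading makes clear that it is the vanishing of the row vector $\dd{V}{\delta_x}B$ --- not of $\delta_u$ --- that removes the control's influence on $\dot V$ in exactly the uncontrollable directions, which is the content of the proposition.
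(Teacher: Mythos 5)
Your proof is correct, but it takes a genuinely different route from the paper's. The paper exploits the affine dependence of $\dot{V}$ on $v$: since (\ref{contract}) must hold for \emph{unbounded} $v$, the coefficient of each $v_i$ is forced to vanish identically, giving the identity $\dd{V}{x}\,b_i + \dd{V}{\delta_x}\,\dd{b_i}{x}\,\delta_x = 0$ (equation (\ref{bzero})) at every point of $\mathcal{TM}$ --- not merely where $\dd{V}{\delta_x}B = 0$ --- and then assembles (\ref{prop1}) by setting $\dd{V}{\delta_x}B = 0$ and adding back $u_i$-multiples of (\ref{bzero}). You instead keep $u = k + v$ intact, observe that the antecedent annihilates the single term $\dd{V}{\delta_x}B\,\delta_u$ by which $\dot{V}$ differs from the left-hand side of (\ref{prop1}), and recover arbitrary $u$ from the arbitrariness of $v$; no unboundedness argument is needed. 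Your route is shorter and cleaner for the proposition itself, but the paper's detour buys a byproduct it uses later: (\ref{bzero}) says precisely that the numerator $\dd{V}{x}B + \dd{V}{\delta_x}\dd{B}{x}\delta_x$ of the ratio (\ref{ratio}) vanishes identically, which is why the remark following Theorem~\ref{TH1} can assert that any system meeting the hypotheses of Proposition~\ref{PROP1} automatically satisfies the boundedness condition on (\ref{ratio}); your argument does not establish this. Two small points of precision: you should state explicitly that the $u$ appearing in (\ref{prop1}) is recovered by choosing $v = u - k(x,t)$, since the conclusion is quantified over $u$ while the hypothesis is quantified over $v$; and in your closing remark, reading the hypothesis with a free $\delta_u$ would force $\dd{V}{\delta_x}B = 0$ everywhere and thus make the antecedent \emph{universally true} (rendering the implication equivalent to its consequent), rather than vacuous.
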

~\\
\begin{proof}
	We first note that $\delta_u = \dd{u}{x} \delta_x = K \delta_x$.
	Substituting the dynamics (\ref{dynamics}) and differential dynamics
	(\ref{differential}) in (\ref{contract}) gives
	\begin{IEEEeqnarray*}{rCl}
		\dd{V}{x}(f + Bu) + \dd{V}{\delta_x}(A\delta_x +
		B\delta_u) \leq -\alpha(V)\\
		\dd{V}{x}f + \dd{V}{x}B(k + v) + \dd{V}{\delta_x}\left(\dd{f}{x}
		+ \sum^m_{i=1} \dd{b_i}{x}\left(k_i + v_i\right)\right)\delta_x\\ +
		\dd{V}{\delta_x}B K \delta_x \leq -\alpha(V).
	\end{IEEEeqnarray*}
	As this is affine in $v$, for the right hand side to remain bounded for
	unbounded $v$, we require
	\begin{IEEEeqnarray}{rCl}
		\dd{V}{x}\,b_i + \dd{V}{\delta_x}\,\dd{b_i}{x} &=& 0\label{bzero}
	\end{IEEEeqnarray}
	For all $i$, where $b_i$ is the $i^\text{th}$ column of $B$. This gives
	\begin{IEEEeqnarray*}{rCl}
		\dd{V}{x}\,f + \dd{V}{\delta_x}\,\dd{f}{x}\,\delta_x +
		\dd{V}{\delta_x}\,B\,K\,\delta_x \leq -\alpha(V).
	\end{IEEEeqnarray*}
	The result follows from letting $(\partial V/\partial \delta_x)\,B = 0$ and
	adding (\ref{bzero}).	
\end{proof}

\section{Open Loop Control Synthesis}\label{open}

The result of Proposition~\ref{PROP1} leads to the question of whether
Condition~(\ref{prop1}) implies the existence of a 
stabilizing control law. In this section, we show that if a  contraction measure 
$V$ can be found such that 
(\ref{prop1}) is true, the system can be universally stabilized by an open loop 
control signal.  This provides a generalization of the open loop results of 
\cite[Th. 1]{Manchester2017}.

\begin{theorem}\label{TH1}
	Consider the system (\ref{dynamics}), (\ref{differential}) on a smooth
	manifold $\mathcal{M}$ with $f \in C^2$.  Suppose there exists a
	candidate Finsler-Lyapunov function $V \in C^\infty$ with 
	Finsler structure $F$ such that
	\begin{IEEEeqnarray}{rCl}
		\dd{V}{\delta_x}B &=& 0 \implies \dd{V}{x}\left(f + Bu\right) +
		\dd{V}{\delta_x}A\delta_x < -\alpha(V) \label{th1}
	\end{IEEEeqnarray}
	for every $t\in\mathbb{R}^+$, $x\in\mathcal{M}$, $u\in\mathbb{R}^n$ and 
	$\delta_x \in
	\mathcal{T}_x\mathcal{M},\; \delta_x \neq 0$.  Furthermore, suppose that for
	all compact subsets $X \subset \mathbb{R}^n$, for all $x\in X$, and for all
	compact subsets $Y \subset \mathbb{R}^n$ not containing 0, for all $\delta_x
	\in Y$, the ratio
	\begin{IEEEeqnarray}{C}
		\frac{\dd{\dot{V}}{u}}{\dd{V}{\delta_x} B B\tran
		\dd{V}{\delta_x}\tran} \label{ratio}
	\end{IEEEeqnarray}
	is bounded, where
	\begin{IEEEeqnarray*}{rCl}
		\dd{\dot{V}}{u} &=& \dd{V}{x} B + \dd{V}{\delta_x} \dd{B}{x}
		\delta_x.
	\end{IEEEeqnarray*}
	Then there exists an open loop control law such that the system is
	\begin{itemize}
		\item universally controllable on $\mathcal{M}$ if $\alpha(s) = 0$ for all
			$s \geq 0$;
		\item universally asymptotically controllable on $\mathcal{M}$ if 
			$\alpha$ is a class $\mathcal{K}$ function;
		\item universally exponentially controllable on $\mathcal{M}$ if 
			$\alpha(s) = \lambda\,s > 0$ for each $s>0$.
	\end{itemize}
	We refer to the function $V$ as a \emph{(Finsler) control contraction metric}
	(CCM).
\end{theorem}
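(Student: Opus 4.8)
The plan is to reduce Theorem~\ref{TH1} to the pointwise synthesis of a differential feedback law, which is then integrated along a minimising geodesic to produce the open loop control, in the spirit of \cite{Manchester2017}. Writing the contraction rate of $V$ along a variation as
\[ \dot{V} = \dd{V}{x}(f + Bu) + \dd{V}{\delta_x}\left(A\delta_x + B\,\delta_u\right), \]
I would first isolate the roles of the two inputs: the differential input $\delta_u$ enters only through $\dd{V}{\delta_x}B\,\delta_u$, while the coefficient of the physical control $u$ in the remaining terms is exactly the numerator $\dd{\dot{V}}{u} = \dd{V}{x}B + \dd{V}{\delta_x}\dd{B}{x}\delta_x$ of the ratio~(\ref{ratio}). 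Since (\ref{th1}) is required to hold for \emph{all} $u$ and its left-hand side is affine in $u$, on the set where $\dd{V}{\delta_x}B = 0$ the coefficient $\dd{\dot{V}}{u}$ must vanish while the remaining drift already lies strictly below $-\alpha(V)$; the boundedness of~(\ref{ratio}) is then the quantitative statement that $\dd{\dot{V}}{u}$ vanishes at least as fast as $\dd{V}{\delta_x}BB\tran\dd{V}{\delta_x}\tran$ as this set is approached.

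Next I would construct, for each $(x,\delta_x)$ with $\delta_x \neq 0$, a differential control $\delta_u = k_\delta(x,\delta_x)$ enforcing $\dot{V} < -\alpha(V)$. Where $\dd{V}{\delta_x}B \neq 0$ this is direct: a min-norm or Sontag-type choice $\delta_u = -\rho(x,\delta_x)\,(\dd{V}{\delta_x}B)\tran$ makes $\dd{V}{\delta_x}B\,\delta_u = -\rho\,\dd{V}{\delta_x}BB\tran\dd{V}{\delta_x}\tran$ arbitrarily negative, dominating $\dd{V}{x}(f+Bu) + \dd{V}{\delta_x}A\delta_x$. The delicate point, and what I expect to be the main obstacle, is that $k_\delta$ remain bounded and continuous as $\dd{V}{\delta_x}B \to 0$: there the required gain $\rho$ has a $0/0$ form, and it is precisely the boundedness of~(\ref{ratio}) — acting as the small-control property of Artstein--Sontag theory — that keeps it finite, since the only $u$-dependent part of the drift, $\dd{\dot{V}}{u}\,u$, then vanishes commensurately. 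I would take $k_\delta$ homogeneous of degree one in $\delta_x$ (legitimate since $V$ is homogeneous of degree $p$ and $F$ of degree one in $\delta_x$), so that it is defined on rays and, by the hypothesis, bounded on the compact sets appearing there.

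Then I would assemble the open loop signal. Given the target $x^*$ with nominal control $u^*$ and an initial state $x(t_i)$, take a minimising geodesic $\gamma(\cdot,t_i)$ from $x^*(t_i)$ to $x(t_i)$ — which exists under forward geodesic completeness by the Hopf--Rinow theorem — and propagate the entire geodesic as a family $\gamma(s,t)$ of solutions of~(\ref{dynamics}) driven by $u(s,t) = u^*(t) + \int_0^s k_\delta(\gamma(\sigma,t),\gamma_\sigma(\sigma,t))\,\mathrm{d}\sigma$, the physical control being the time signal $u(t) = u(1,t)$. Since $\partial_t\gamma_s = A\gamma_s + B\,\partial_s u$, the tangent $\delta_x = \gamma_s$ satisfies the differential dynamics~(\ref{differential}) with $\delta_u = k_\delta$, so the pointwise estimate gives $\frac{\mathrm{d}}{\mathrm{d}t}V(\gamma(s,t),\gamma_s(s,t)) < -\alpha(V)$ for every $s$. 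Here I must check well-posedness: that the family exists for all $t \ge t_i$ (using $f \in C^2$, $V \in C^\infty$ and boundedness of $k_\delta$ on compacts) and that the path stays regular, $\gamma_s \neq 0$ on $(0,1]$ until the trajectories coincide, which follows from positive definiteness of $F$ and the strict decay of $V$.

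Finally I would convert pointwise decay of $V$ into decay of the Finsler length $\int_0^1 F\,\mathrm{d}s$, and hence of $d$. Applying the comparison lemma to $\dot{V} \le -\alpha(V)$ together with the bounds~(\ref{bounds}), $c_1 F^p \le V \le c_2 F^p$, I bound $F(\gamma(s,t),\gamma_s(s,t))$ by its value at $t_i$, integrate over $s$, and use $\int_0^1 F\,\mathrm{d}s \ge d(x^*(t),x(t))$ together with the geodesic normalisation $\int_0^1 F\,\mathrm{d}s\big|_{t_i} = d(x^*(t_i),x(t_i))$. When $\alpha = 0$ the length is non-increasing, giving universal controllability; when $\alpha \in \mathcal{K}$ the comparison estimate is class $\mathcal{KL}$ and the length, hence $d$, tends to zero, giving asymptotic controllability; and when $\alpha(s)=\lambda s$ the bound $V(s,t) \le e^{-\lambda(t-t_i)}V(s,t_i)$ yields $d(x^*(t),x(t)) \le (c_2/c_1)^{1/p}\, e^{-(\lambda/p)(t-t_i)}\, d(x^*(t_i),x(t_i))$, i.e. exponential controllability with $K=(c_2/c_1)^{1/p}$. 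Since $x^*$ was arbitrary, all three properties are universal.
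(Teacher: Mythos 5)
Your overall architecture is the same as the paper's: a Sontag-type differential feedback $k_\delta = -\rho\, B\tran \dd{V}{\delta_x}\tran$, integrated in $s$ along a path joining $x^*(t_i)$ to $x(t_i)$ whose forward image supplies the open loop signal, followed by the comparison lemma and the sandwich bounds (\ref{bounds}) to convert decay of $V$ into decay of $d$, with the same three-way case analysis in $\alpha$. The genuine gap is in the middle step. The drift that $k_\delta$ must dominate, $a(x,\delta_x,u) = \dd{V}{x}(f+Bu) + \dd{V}{\delta_x}A\delta_x + \alpha(V)$, depends on $u$ both through $\dd{V}{x}Bu$ and through $A = \dd{f}{x} + \sum_i \dd{b_i}{x}u_i$, and $u$ ranges over an unbounded set; so no gain $\rho(x,\delta_x)$ chosen independently of $u$ can dominate it, and $k_\delta$ is necessarily a function of $(x,\delta_x,u)$. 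Consequently your control path cannot be the explicit integral $u(s,t) = u^*(t) + \int_0^s k_\delta(\gamma,\gamma_\sigma)\,\mathrm{d}\sigma$ that you wrote: it is the implicit (Volterra-type) equation $k_p(s) = u^* + \int_0^s k_\delta\left(c(\sigma), c_s(\sigma), k_p(\sigma)\right)\mathrm{d}\sigma$, since the feedback at parameter $s$ must be evaluated at the very control $k_p(s)$ it is helping to define. Proving that this equation has a unique solution on $[0,1]$ is the central technical step of the paper's proof, and it is exactly where the boundedness of the ratio (\ref{ratio}) is used: the paper computes $\dd{\rho}{u} = \frac{1}{b}\dd{a}{u}\left(1 + a/\sqrt{a^2+b^2}\right)$, observes that the only potentially unbounded factor $\frac{1}{b}\dd{a}{u}$ is precisely (\ref{ratio}), concludes $\rho$ (hence $k_\delta$) is globally Lipschitz in $u$, and then invokes \cite[Th. 3.2]{Khalil2002}. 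You instead assign (\ref{ratio}) the role of keeping $k_\delta$ finite and continuous as $\dd{V}{\delta_x}B \to 0$; but that continuity already follows from (\ref{th1}) together with \cite[Th. 1]{Sontag1989}, since (\ref{th1}) is exactly the statement $b = 0 \implies a < 0$. So your reading both misplaces the key hypothesis and leaves the fixed-point/integrability argument unaddressed; as written, the construction either uses a $u$-independent gain (and then the claimed decay $\dot V < -\alpha(V)$ fails) or silently requires solving an integral equation whose well-posedness you never establish.

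Two smaller discrepancies. First, you base the construction on a minimising geodesic via Hopf--Rinow, but forward geodesic completeness is not among the hypotheses of Theorem~\ref{TH1}; the paper deliberately integrates along an \emph{arbitrary} smooth regular path in $\Gamma(x^*(t_i), x(t_i))$, states its convergence bounds in terms of the initial path length, and only invokes a minimising geodesic for the final overshoot bound $d(x^*(t),x(t)) < (c_2/c_1)^{1/p}e^{-\frac{\lambda}{p}(t-t_i)}d(x^*(t_i),x(t_i))$ in the exponential case. Second, your claim that $k_\delta$ may be taken homogeneous of degree one in $\delta_x$ ``since $V$ is homogeneous of degree $p$'' is unsupported: the bounds (\ref{bounds}) sandwich $V$ between homogeneous functions but do not make $V$ itself homogeneous, and neither the theorem nor the paper assumes this.
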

The condition that (\ref{ratio}) is bounded is true for any system meeting the
conditions of Proposition~\ref{PROP1} or the strong
conditions given in \cite[Sec. III. A.]{Manchester2017}. We now give the control law 
construction and then prove stabilizability.

\subsection{Open Loop Control Construction}\label{OLcontrol}

We construct a local control law that stabilizes the differential dynamics
(\ref{differential}), following the construction of \cite[Lemma 2]{Manchester2017}.
We then integrate this along a path connecting an arbitrary target trajectory with
the current trajectory and apply the control signal corresponding to the integral
evaluated at the current trajectory.

Define
\begin{IEEEeqnarray*}{rCl}
	a(x, \delta_x, u) &=& \dd{V}{x}\,(f + Bu) + \dd{V}{\delta_x}\,A\delta_x +
	\alpha(V)\\
	b(x, \delta_x) &=& \dd{V}{\delta_x}\,B\,B\tran\dd{V}{\delta_x}\tran.
\end{IEEEeqnarray*}
Let 
\begin{IEEEeqnarray*}{rCl}
	\rho(x, \delta_x, u) &=& \left\{ \begin{array}{c l}
						0 & \text{if}\; a < 0\\
						\frac{a + \sqrt{a^2 + b^2}}{b} &
						\text{otherwise.}
	\end{array}\right.
\end{IEEEeqnarray*}
The differential feedback control is given by
\begin{IEEEeqnarray}{rCl}
	k_\delta(x, \delta_x, u) &=& -\rho(x, \delta_x, u)\,B(x)\tran\dd{V(x,
	\delta_x)}{\delta_x}\tran.\label{kd}
\end{IEEEeqnarray}

The open loop control signal to stabilize the system to a target trajectory $(x^*(t),
u^*(t))\in \mathcal{C}\times \mathbb{R}^n$ in a time interval $\mathcal{T}$ of the
form $[t_i, t_e)$, $[t_i,
t_e]$, or $[t_i, \infty)$ is then calculated as follows:
\begin{enumerate}
	\item Measure $x(t_i)$ and construct a smooth path $c(t_i, s) \in
		\Gamma(x^*(t_i), x(t_i))$.
	\item \label{control} Solve the following equation for $k_p$:
		\begin{IEEEeqnarray*}{R}
			k_p(c, u^*, t, s) = u^*\; + \\
			 \int^s_0 k_\delta(c(t, s), c_s(t, s),
			k_p(c, u^*, t, s), t)\;\mathrm{d}s,
		\end{IEEEeqnarray*}
		where $c_s = \partial c/\partial s$.
	\item For each $t\in\mathcal{T}$, apply the control signal $u(t) = k_p(c(t, s),
		u^*(t), t, 1)$, where $c(t, s)$ is the forward image of $c(t_i, s)$ with
		the path of controls defined in Equation~(\ref{control}).  
		That is, for all
		$s\in[0,1]$ and $t\in\mathcal{T}$, $c(t,s)$ is a solution to
		\begin{IEEEeqnarray*}{rCl}
			\frac{\mathrm{d}}{\mathrm{d}t}\,c(t,s) &=& f(c(t,s), t)\; +\\
			&& B(c(t,s),t)\,k_p(c(t), u^*(t), t, s).
		\end{IEEEeqnarray*}
\end{enumerate}

\begin{proof}[Proof of Theorem \ref{TH1}]
	It follows from \cite[Lemma 2]{Manchester2017} that the differential 
	control (\ref{kd}) makes the extended system
	dissipative with respect to the storage function $V$ and supply rate
	$\alpha(V)$, that is,
	\begin{IEEEeqnarray}{rCl}
		\dot{V} &=& \dd{V}{x}(f + B\,u) + \dd{V}{\delta_x}(A\delta_x +
		Bk_\delta) < -\alpha(V).\label{dissipative}
	\end{IEEEeqnarray}
	Indeed, substituting (\ref{kd}) into the left hand side of (\ref{dissipative}) gives
	\begin{IEEEeqnarray*}{rCl}
		\dot{V} &=& a - \rho\,b - \alpha(V)\\
		&=& -\alpha(V) - \sqrt{a^2 + b^2}\\
		&<& -\alpha(V).
	\end{IEEEeqnarray*}

	We now show that the differential control signal (\ref{kd}) is integrable
	along regular curves in $\mathcal{M}$.  That is, for any regular curve
	$c:[0,1] \rightarrow \mathcal{M}$ and any $u_0\in \mathbb{R}^n$,
	$t\in\mathbb{R}^+$, a unique solution of the following integral equation
	exists on $s\in [0,1]$:
	\begin{IEEEeqnarray}{rCl}
		v(s) &=& u_0 + \int^s_0 k_\delta\left(c(s), c_s(s), v(s)
		\right)\mathrm{d}s.\label{int}
	\end{IEEEeqnarray}

	The condition (\ref{th1}) implies either $b > 0$ or $a < 0$.  It follows from 
	\cite[Th. 1]{Sontag1989} that $\rho$ is smooth for all $x$, $u$ and $\delta_x
	\neq 0$, and the apparent discontinuity at $b = 0$ is removed by setting
	$\rho = 0$ when $b = 0$.  Smoothness of $\rho$ implies smoothness of
	$k_\delta$ when $\delta_x \neq 0$, which is the case in Equation~(\ref{int})
	where $\delta_x$ is set to $c_s(s)$, which is non-zero by regularity of $c(s)$.

	It follows from \cite[Th. 3.2]{Khalil2002} that a unique solution to (\ref{int})
	exists if $k_\delta$ is a globally Lipschitz function with respect to its
	third argument for $s\in[0, 1]$.  As $B$ and $V$ are continuously
	differentiable and smooth respectively, the product $B(\partial V/\partial
	\delta_x)$ is bounded on
	closed intervals. Hence if $\rho$ is Lipschitz with respect to $u$ on
	$s\in[0,1]$, so too is $k_\delta$. As $\rho$ is smooth, it is globally
	Lipschitz if its derivative with respect to its third argument is bounded.  
	This is clear for $b \leq 0$.  For $b > 0$, noting that the only dependence
	$\rho$ has on $u$ is via $a$, we have
	\begin{IEEEeqnarray*}{rCl}
		\dd{\rho}{u} &=& \dd{}{u}\left(\frac{a(u) + \sqrt{a^2(u) +
		b^2}}{b}\right)\\
		&=& \frac{1}{b}\dd{a}{u} \left(1 + \frac{a(u)}{\sqrt{a^2(u) +
		b^2}}\right).
	\end{IEEEeqnarray*}
	Since $a = 0 \implies b > 0$ and $a$ is affine in $u$, the only term that can
	be unbounded is $(1/b)(\partial a/\partial u)$.  However, this is precisely
	the term which is bounded by Condition~(\ref{ratio}).  Hence $\rho$ is
	Lipschitz with respect to $u$ for $s\in[0, 1]$ and a solution to (\ref{int})
	exists.

	We now show that applying the control law of Section~\ref{OLcontrol} makes
	the initial trajectory $x$ converge to the (arbitrary) chosen 
	trajectory $x^*$. Universal exponential stabilizability follows. 

	Consider a regular curve $c(t, s) \in \Gamma(x^*(t_i), x(t_i))$.  Then, for all $t
	\geq t_i$, we have $c(t, 0) = x^*(t)$ and $c(t, 1) = x(t)$.  Furthermore, for
	all $t \geq t_i$ and all $s \in [0,1]$, $c_s = \partial c/ \partial s$
	satisfies the differential dynamics (\ref{differential}):
	\begin{IEEEeqnarray*}{rCl}
		\frac{\mathrm{d}}{\mathrm{d}t} c_s(t,s) &=& A\,c_s + B\,k_\delta.\\
	\end{IEEEeqnarray*} 
	It follows from (\ref{dissipative}) that
	\begin{IEEEeqnarray}{rCl}
		\frac{\mathrm{d}}{\mathrm{d}t} V(c(t), c_s(t)) &<& -\alpha(V(c(t),
		c_s(t))). \label{diss1}
	\end{IEEEeqnarray}

	We now consider three cases, corresponding to the three forms of $\alpha$
	given in the statement of Theorem~\ref{TH1}.

	If $\alpha(s) = 0$, (\ref{diss1}) gives $\frac{\mathrm{d}}{\mathrm{d}t} V(c(t), c_s(t))
	< 0$, so $V(c(t), c_s(t)) < V(c(t_i), c_s(t_i))$ for all $t \geq t_i$.
	As $V$ is non-negative, this gives $V(c(t), c_s(t))^{1/p} < V(c(t_i),
	c_s(t_i))^{1/p}$. It follows that
	\begin{IEEEeqnarray*}{rCl}
		d(x^*(t), x(t)) &\leq& \int_I F(c(t), c_s(t)) \mathrm{d}s\\
		&\leq& c_1^{-\frac{1}{p}} \int_I V(c(t),
		c_s(t))^{\frac{1}{p}}\mathrm{d}s\\
		&<& c_1^{-\frac{1}{p}}\int_I V(c(t_i),
		c_s(t_i))^{\frac{1}{p}}\mathrm{d}s\\
		&\leq& \left(\frac{c_2}{c_1}\right)^{\frac{1}{p}}
		\int_I F(c(t_i), c_s(t_i)) \mathrm{d}s.
	\end{IEEEeqnarray*}
	As the choice of $x^*$ is arbitrary, this implies that the system is
	universally controllable.

	If $\alpha(V)$ is a class $\mathcal{K}$ function, (\ref{diss1}) again gives 
	$\frac{\mathrm{d}}{\mathrm{d}t} 
	V(c(t), c_s(t)) < 0$, so the system is universally controllable.
	Furthermore, as shown in the proof of \cite[Th. 1]{Forni2014a}, there exists a
	$\mathcal{KL}$ function $\beta$ such that
	\begin{IEEEeqnarray*}{rCl}
		V(c(t), c_s(t)) &\leq& \beta(C(c(t_i), c_s(t_i), t-t_i)).
	\end{IEEEeqnarray*}
	Integrating with respect to $s$,
	\begin{IEEEeqnarray*}{rCl}
		\int_I F(c(t), c_s(t))^p \mathrm{d}s &\leq& c_1^{-1} \int_I \beta(V(c(t_i),
		c_s(t_i)), t-t_i)\mathrm{d}s\\
		d(x^*(t), x(t)) &\leq& c_1^{-\frac{1}{p}} \int_I \beta( V(c(t_i),
		c_s(t_i)), t-t_i)^{\frac{1}{p}}\\
		\lim_{t\rightarrow \infty} d(x^*(t), x(t)) &\leq& c_1^{-\frac{1}{p}}
		\\ &&
		\lim_{t\rightarrow \infty} \int_I \beta( V(c(t_i), c_s(t_i)), 
		t-t_i)^{\frac{1}{p}}\\
		&=& 0,
	\end{IEEEeqnarray*}
	where the final equality follows from the definition of a $\mathcal{KL}$
	function and Lebesgue's dominated convergence theorem.  As the choice of
	$x^*$ is arbitrary, universal asymptotic controllability follows.
	
	If $\alpha(V) = \lambda\,V$, (\ref{diss1}) and \cite[Th. 6.1]{hale1969} give
	\begin{IEEEeqnarray*}{rCl}
		V(c(t), c_s(t)) &<& e^{-\lambda{(t-t_i)}} V(c(t_i), c_s(t_i))\\
		V(c(t), c_s(t))^{\frac{1}{p}} &<& e^{\frac{-\lambda}{p}{(t-t_i)}}
		V(c(t_i), c_s(t_i))^{\frac{1}{p}}.
		\IEEEyesnumber \label{alphalambda}
	\end{IEEEeqnarray*}
	We then have
	\begin{IEEEeqnarray*}{rCl}
		d(x^*(t), x(t)) &\leq& \int_I F(c(t), c_s(t))\mathrm{d}s\\
		&\leq& c_1^{-\frac{1}{p}} \int_I V(c(t), c_s(t))^\frac{1}{p}
		\mathrm{d}s\\
		&\leq& c_1^{-\frac{1}{p}} \int_I e^{\frac{-\lambda}{p}{(t-t_i)}} 
		V(c(t_i), c_s(t_i))^\frac{1}{p} \mathrm{d}s\\
		&\leq&  \left(\frac{c_2}{c_1}\right)^{\frac{1}{p}}
		e^{\frac{-\lambda}{p}{(t-t_i)}} 
		\int_I F(c(t_i), c_s(t_i)) \mathrm{d}s
	\end{IEEEeqnarray*}
	
	That is, the trajectories $x$ and $x^*$ converge exponentially with rate
	$\lambda/p$.
	Furthermore, if $c(t_i)$ is a minimising geodesic,
	\begin{IEEEeqnarray*}{rCl}
		d(x^*(t), x(t)) &<& 
		\left(\frac{c_2}{c_1}\right)^{\frac{1}{p}}
		e^{-\frac{\lambda}{p}(t-t_i)} \\ && d(x^*(t_i), x(t_i)).
	\end{IEEEeqnarray*}
	This implies that, if $c$ is a minimising geodesic, the overshoot is bounded 
	above by
	$(c_2/c_1)^{(1/p)}$.
	As the choice of $x^*$ is arbitrary, this proves Theorem~\ref{TH1} for the
	case $\alpha(V) = \lambda\,V$.
\end{proof}

\begin{remark}
	The control scheme proposed in Section~\ref{OLcontrol} appears difficult to
	compute.  Computation of this scheme is dealt with (under the restricted class 
	of Riemannian metrics) in several other papers. Manchester and Slotine 
	\cite{Manchester2017} define a continuous feedback control
	which removes the need for Step 3 of the open loop construction. Leung and
	Manchester \cite{Leung2017} present a pseudospectral approach for the
	computation of the path in Step 1, and show that this approach is more
	efficient than nonlinear Model Predictive Control.  While we do not detail
	any computational methods, the following two examples illustrate the construction
	of open loop controllers for simple systems.
\end{remark}
	
\begin{example}\label{example1}
	Let $\mathcal{M} = \mathbb{R}^2$ and consider the system
	\begin{IEEEeqnarray*}{rCl}
		\dot{x} &=& \left(\begin{array}{c c} 1 & 0\\ 0 & -1\end{array}\right)x
			+ \left(\begin{array}{c} 1\\0\end{array}\right) u.
	\end{IEEEeqnarray*}
	Let $V = \delta_1^4 + \delta_2^4$, with one possible Finsler structure given
	by $V^{1/4}$.  Then
	\begin{IEEEeqnarray*}{rCl}
		\dd{V}{\delta_x} B &=& 4\delta_1^3,
	\end{IEEEeqnarray*}
	which is zero at $\delta_1 = 0$, and
	\begin{IEEEeqnarray*}{rCl}
		\dd{V}{\delta_x}\left(\dd{f}{x} + \dd{B}{x} u\right)\delta_x &=&
		4\left(\delta_1^4 - \delta_2^4\right),	
	\end{IEEEeqnarray*}
	which is strictly less than $(-4+\epsilon)\delta_2^4$ when $\delta_1 = 0$.
	Furthermore, the numerator of (\ref{ratio}) is always zero as $\partial
	B/\partial x = 0$, so the ratio (\ref{ratio}) is bounded.  Hence it follows
	from Theorem~\ref{TH1} that this system is universally controllable.

	Treating $\epsilon$ as zero, we have
	\begin{IEEEeqnarray*}{rCl}
		\rho &=& \frac{-1 -\sqrt{1 + 16\delta_1}}{4 \delta_1^2}\\
		k_\delta &=& -\delta_1 -\delta_1\sqrt{1 + 16 \delta_1^4}.
	\end{IEEEeqnarray*}

	Now suppose that our initial position is $(1, 1)$ and our desired trajectory
	is $x^*(t) = (0, 0), u^* = 0$. A path connecting our initial and desired
	position is given by $c(t_i, s) = (s, s)$.  We then have the following
	equation for $k_p$:
	\begin{IEEEeqnarray}{rCl}
		k_p &=& \int_0^s -\dd{c}{s} -\dd{c}{s}\sqrt{1 + 16
		\left(\dd{c}{s}\right)^4}\;\mathrm{d}s\label{exampleControl}.
	\end{IEEEeqnarray}
	This is solved approximately by discretising both in time and with respect to $s$ 
	along the curve $c(t, s)$. At each time step, Equation~(\ref{exampleControl})
	is solved approximately by quadrature.  The forward image of each discretised
	point on $c(t, s)$ is then calculated by numerical integration of the system
	dynamics with the newly computed control signal.

	Figure~\ref{example1fig} illustrates the time response of the unstable state 
	given this control scheme.  The response of the same system with control
	calculated using the Riemannian metric $V = \delta_1^2 + \delta_2^2$ is also illustrated.

	\begin{figure}
		\vspace{0.2cm}
		\centering
		\newlength\figureheight
		\newlength\figurewidth
		\setlength\figureheight{0.18\textwidth}
		\setlength\figurewidth{0.4\textwidth}
		\input{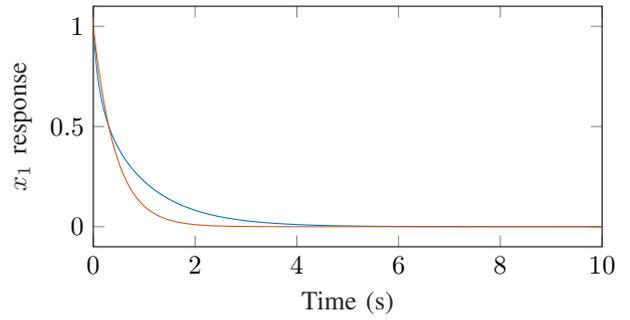}
		\caption{Response of state $x_1$ of the system in
		Example~\ref{example1} under control computed with $V = \delta_1^2 +
		\delta_2^2$ (orange) and $V = \delta_1^4 + \delta_2^4$ (blue).}
		\label{example1fig}
	\end{figure}

\end{example}

\begin{example}\label{example2}
	Consider the one dimensional system $\dot{\theta} = -\sin\theta + u$, which
	approximates an overdamped pendulum.  Let $\mathcal{M} = [0, \pi]$. Note that 
	the system has a stable equilibrium at $\theta = 0$ (the downright
	equilibrium) and an unstable equilibrium at $\theta = \pi$ (the upright
	equilibrium).  We compare control computed 
	with a Riemannian (and hence symmetric) metric $V_1 = 4\delta_\theta^2$, and 
	an asymmetric Finsler metric $V_2 = (2\sqrt{\delta_\theta^2} - \delta_\theta)^2$ 
	(which is the square of a Randers metric \cite[Sec. 1.3C]{Bao2000}).  Both 
	metrics satisfy the conditions of Theorem~\ref{TH1} with a Finsler structure 
	given by $F_i = \sqrt{V_i}$, except that $V_2$ is only $C^1$ and not smooth.  
	This means the proof of integrability does not apply.  However, we find in 
	this case that a controller can be computed.  
	
	The
	induced Finsler distance $d(x_1, x_2)$ may be thought of as the cost to go 
	from state $x_1$ to state $x_2$.  If $x_1 > x_2$, use of the asymmetric $V_2$
	and $F_2$
	gives $d(x_1, x_2) > d(x_2, x_1)$. In real terms, it is more costly to rotate the pendulum
	downwards than upwards.  Intuitively, there is no reason for a
	``cost to go'' to be symmetric - in this case, rotating the pendulum
	downwards represents a loss of potential energy and can be deemed as more
	expensive than the corresponding gain in potential energy. The controller 
	computed with this asymmetric metric
	uses larger control input to move the pendulum from $\theta = 0$ to 
	$\theta = \pi$ (Figure~\ref{example2fig1}) than in the opposite direction
	(Figure~\ref{example2fig2}).  Comparing Figures~\ref{example2fig1} 
	and~\ref{example2fig2}, we see that the asymmetric controller regulates
	faster than the symmetric controller when moving the pendulum from $0$
	(pointing downwards) to $\pi$ (pointing upwards),
	while the symmetric controller regulates faster moving in the opposite
	direction.  Peak control input from the symmetric controller is roughly equal
	in each case, while peak control input from the asymmetric controller is six
	times larger when moving from $0$ to $\pi$.

	\begin{figure}
		\vspace{0.2cm}
		\centering
		\setlength\figureheight{0.4\textwidth}
		\setlength\figurewidth{0.4\textwidth}
		\input{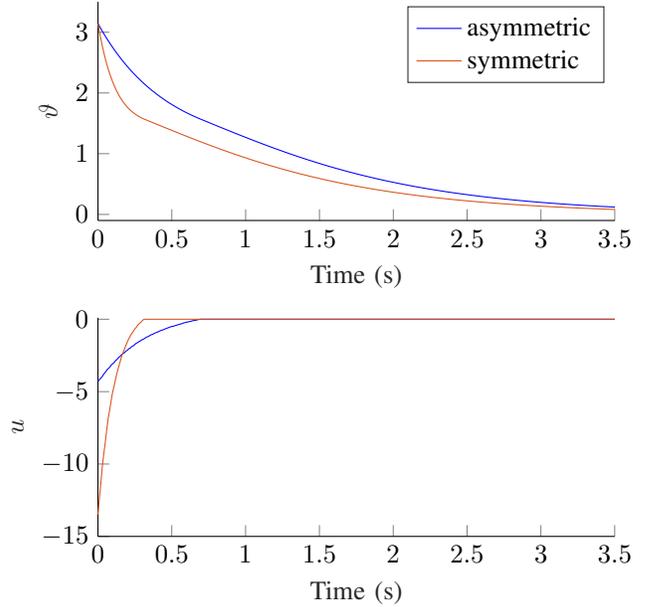}
		\caption{Response of the pendulum in Example~\ref{example2} moving
		from $\pi$ to 0 under control computed with  $V =
		(2\sqrt{\delta_\theta^2} - \delta_\theta)^2$ (asymmetric) and $V =
		4\delta_\theta^2$ (symmetric).}
		\label{example2fig1}
	\end{figure}

	\begin{figure}
		\vspace{0.2cm}
		\centering
		\setlength\figureheight{0.4\textwidth}
		\setlength\figurewidth{0.4\textwidth}
		\input{example2fig2.tikz}
		\caption{Response of the pendulum in Example~\ref{example2} moving
		from $0$ to $\pi$ under control computed with  $V =
		(2\sqrt{\delta_\theta^2} - \delta_\theta)^2$ (asymmetric) and $V =
		4\delta_\theta^2$ (symmetric).}
		\label{example2fig2}
	\end{figure}
\end{example}

\begin{example}
	Let $\mathcal{M} = \mathbb{R}$ and consider the system
	\begin{IEEEeqnarray*}{rCl}
		\dot{x} = -x + x^2 u.
	\end{IEEEeqnarray*}
	This example illustrates the importance of the ratio (\ref{ratio}) remaining
	bounded.  With a Lyapunov function of $V = \delta_x^2$ and Finsler structure
	$F = \sqrt{V}$, we have
	\begin{IEEEeqnarray*}{rCl}
		\dd{V}{\delta_x} B &=& 2\delta_x x^2,\; \text{and}\\
		\dd{V}{\delta_x} \left( \dd{f}{x} + \dd{B}{x}u\right)\delta_x &=&
		-2\delta_x^2 + 4\delta_x^2 u x.
	\end{IEEEeqnarray*}
	Hence the system and Lyapunov function meet the requirement~(\ref{th1}).
	However, computing the ratio~(\ref{ratio}) gives $1/x^3$, which is unbounded
	as $x \rightarrow 0$.  This means the control signal cannot be integrated.
	Intuitively, if we begin with an initial condition of zero, we are at a
	stable equilibrium with no control input and can never leave.
\end{example}

\section{Closed Loop Control Synthesis}\label{closed}

A natural question is whether the open loop results of the previous section can be
adapted to a closed loop controller.  In this section, we develop sampled data
feedback controllers that guarantee universal stabilizability.  We first give a
general sampled data control construction, then prove its properties.

\subsection{Closed Loop Control Construction}\label{CLlambda}

Given a continuous time period $\mathcal{T} = [t_i, t_e)$, $[t_i, t_e]$ or $[t_i,
\infty)$, we construct a closed loop control calculated at discrete times in
$\mathcal{T}$ as follows.
\begin{enumerate}
	\item At the initial time $t_i$, measure the present state $x(t_i)$ and
		construct a path $c_i \in \Gamma(x^*(t_i), x(t_i))$.
	\item Run the open loop control constructed in Section~\ref{OLcontrol} 
		on the interval
		$[t_i, t_{i+1})$ for some $t_{i+1} > t_i$. \label{steptwo}
	\item At time $t_{i+1}$, compute a new path $c_{i+1}$ such that 
		\begin{IEEEeqnarray*}{C}
			\int_I V(c_{i+1}(t_{i+1}), c_{s,
			i+1}(t_{i+1}))^{\frac{1}{p}}\mathrm{d}s\\
			\leq \int_I V(c_{i}(t_{i+1}), c_{s,
			i}(t_{i+1}))^{\frac{1}{p}} \mathrm{d}s \IEEEyesnumber \label{newpath}
		\end{IEEEeqnarray*}
		for all $s\in I$ and return to step~\ref{steptwo}.
\end{enumerate}
Note that the sample times $t_i, t_{i+1}, \ldots$ may be chosen arbitrarily.


\begin{proposition}\label{CLprop}
	Consider the system (\ref{dynamics}), (\ref{differential}) on a smooth
	manifold $\mathcal{M}$ with $f \in C^2$.  Suppose there exists a
	Finsler-Lyapunov function $V \in C^\infty$ with Finsler structure $F$ that
	satisfies the conditions of Theorem~\ref{TH1}.  Then the system is
	\begin{itemize}
		\item universally stabilizable on $\mathcal{M}$ via the control
			construction of Section~\ref{CLlambda} if $\alpha = 0$. 
		\item universally exponentially stabilizable on $\mathcal{M}$ via the
			control construction of Section~\ref{CLlambda} 
			if $\alpha(V) = -\lambda V$ for fixed $\lambda > 0$.
	\end{itemize}
\end{proposition}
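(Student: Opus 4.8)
The plan is to track the \emph{$V$-length} of the active path, $\ell_i(t) = \int_I V(c_i(t), c_{s,i}(t))^{1/p}\,\mathrm{d}s$, and to show that it is nonincreasing (respectively, decays exponentially) across the entire time axis despite the switching. On each interval $[t_i, t_{i+1}]$ the open-loop controller of Section~\ref{OLcontrol} is running, so by the proof of Theorem~\ref{TH1} the forward image $c_i$ has tangent $c_{s,i}$ satisfying the differential dynamics~(\ref{differential}), and hence the dissipation inequality~(\ref{dissipative}), $\frac{\mathrm{d}}{\mathrm{d}t}V(c_i, c_{s,i}) < -\alpha(V)$, holds pointwise in $s$. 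This is the workhorse of the argument.

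First I would handle the within-interval behaviour. For $\alpha = 0$ the inequality gives $\frac{\mathrm{d}}{\mathrm{d}t}V(c_i(t,s), c_{s,i}(t,s)) < 0$ for each $s$, so $V^{1/p}$ is nonincreasing in $t$ pointwise and therefore $\ell_i(t) \le \ell_i(t_i)$ for $t \in [t_i, t_{i+1}]$. For $\alpha(V) = \lambda V$ the comparison bound~(\ref{alphalambda}) gives $V(c_i(t), c_{s,i}(t))^{1/p} \le e^{-\frac{\lambda}{p}(t-t_i)}V(c_i(t_i), c_{s,i}(t_i))^{1/p}$ pointwise in $s$, and integrating over $I$ yields $\ell_i(t) \le e^{-\frac{\lambda}{p}(t-t_i)}\ell_i(t_i)$.

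Next I would splice the intervals together using the switching rule~(\ref{newpath}), which is exactly $\ell_{i+1}(t_{i+1}) \le \ell_i(t_{i+1})$. Observe that the forward image $c_i(t_{i+1}, \cdot)$ is itself an admissible curve in $\Gamma(x^*(t_{i+1}), x(t_{i+1}))$, so taking $c_{i+1} = c_i(t_{i+1}, \cdot)$ satisfies~(\ref{newpath}) with equality; hence a feasible $c_{i+1}$ always exists and the construction never stalls. Chaining the within-interval estimate with the switching estimate then telescopes: for $t \in [t_k, t_{k+1}]$, in the case $\alpha = 0$ one gets $\ell_k(t) \le \ell_k(t_k) \le \ell_{k-1}(t_k) \le \cdots \le \ell_0(t_0)$, while in the exponential case the exponents add to give $\ell_k(t) \le e^{-\frac{\lambda}{p}(t-t_0)}\ell_0(t_0)$, independently of how the (arbitrary) sample times are chosen.

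Finally I would convert the length bound into a distance bound. Since $x(t)$ is continuous across switches and $c_{k(t)}(t, \cdot) \in \Gamma(x^*(t), x(t))$ at all times, the definition~(\ref{finsler_distance}) and the lower bound in~(\ref{bounds}) give $d(x^*(t), x(t)) \le \int_I F(c_{k(t)}(t), c_{s,k(t)}(t))\,\mathrm{d}s \le c_1^{-1/p}\ell_{k(t)}(t)$. Choosing the initial curve $c_0$ to be a minimising geodesic and applying the upper bound in~(\ref{bounds}) gives $\ell_0(t_0) \le c_2^{1/p}\,d(x^*(t_i), x(t_i))$. Combining, $d(x^*(t), x(t)) \le (c_2/c_1)^{1/p}\,d(x^*(t_i), x(t_i))$ when $\alpha = 0$ (universal stabilizability, the class-$\mathcal{K}$ function being the linear map $r \mapsto (c_2/c_1)^{1/p}r$), and $d(x^*(t), x(t)) \le (c_2/c_1)^{1/p}\,e^{-\frac{\lambda}{p}(t-t_i)}\,d(x^*(t_i), x(t_i))$ when $\alpha(V) = \lambda V$ (universal exponential stabilizability, with $K = (c_2/c_1)^{1/p} \ge 1$ and rate $\lambda/p$). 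Since $x^*$ is arbitrary, universality follows in both cases. The main obstacle I expect is not any single estimate but the careful bookkeeping of the telescoping across the possibly infinite sequence of switches: verifying that~(\ref{newpath}) is always feasible (handled by the forward-image observation) and, crucially, that the intra-interval decay of Theorem~\ref{TH1} applies afresh to each re-initialised path $c_i$, whose tangent $c_{s,i}$ must again satisfy~(\ref{differential}) so that~(\ref{dissipative}) remains in force. Because the per-switch and per-interval bounds are order-preserving ($\alpha = 0$) or multiplicative (exponential case), the resulting estimate is insensitive to the choice of sample times, which is precisely the point of the construction.
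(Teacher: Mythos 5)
Your proof is correct and takes essentially the same route as the paper's: within-interval decay supplied by the open-loop dissipation argument of Theorem~\ref{TH1}, the switching inequality~(\ref{newpath}), an induction (telescoping) across sample times, and conversion to a distance bound via~(\ref{bounds}). Your two additions --- noting that the forward image makes~(\ref{newpath}) always feasible, and taking the initial path to be a minimising geodesic so that the final estimate is expressed in terms of $d(x^*(t_i),x(t_i))$ rather than the $F$-length of the initial path --- are refinements the paper leaves implicit.
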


\begin{remark}
	In the open loop case, if $\alpha$ is a class $\mathcal{K}$ function, the
	system exhibits asymptotic stabilizability.  However, as we have no information
	about the rate of convergence, for the sampled data
	controller we can only guarantee regular stabilizability (the distance
	between the current and target trajectories remains bounded for all time).
\end{remark}

\begin{proof}[Proof of Proposition~\ref{CLprop}]
	First consider the case $\alpha = 0$. Then, for the first time period,
	(\ref{diss1}) gives 
	\begin{IEEEeqnarray*}{rCl}
		\int_I V(c(t), c_s(t))^{\frac{1}{p}}\mathrm{d}s &\leq&
		\int_I V(c_i(t_i), c_{s, i}(t_i))^{\frac{1}{p}}\mathrm{d}s.\\
	\end{IEEEeqnarray*}
	On the second time period, we have
	\begin{IEEEeqnarray*}{l}
		\int_I V(c_{i+1}(t), c_{s, i+1}(t))^{\frac{1}{p}}\mathrm{d}s\\
		\leq \int_I 
		V(c_{i+1}(t_{i+1}), c_{s, i+1}(t_{i+1}))^{\frac{1}{p}}\mathrm{d}s\\
		\leq \int_I 
		V(c_{i}(t_{i+1}), c_{s, i}(t_{i+1}))^{\frac{1}{p}}\mathrm{d}s\\
		\leq \int_I 
		V(c_{i}(t_{i}), c_{s, i}(t_{i}))^{\frac{1}{p}}\mathrm{d}s.\\
	\end{IEEEeqnarray*}
	By induction, it follows that, on any time period,
	\begin{IEEEeqnarray*}{rCl}
		\int_I V(c_{i+j}(t), c_{s, i+j}(t))^{\frac{1}{p}}\mathrm{d}s &\leq&  
		\int_I V(c_{i}(t_{i}), c_{s, i}(t_{i}))^{\frac{1}{p}}\mathrm{d}s.\\
	\end{IEEEeqnarray*}
	This gives, for all $t$,
	\begin{IEEEeqnarray*}{rCl}
		d(x^*(t), x(t)) &\leq& \int_I F(c_{i + j}(t), c_{s, i+j}(t))\mathrm{d}s\\
		&\leq& c_1^{-\frac{1}{p}} \int_I V(c_{i+j}(t), c_{s, i+j}(t))^\frac{1}{p}\mathrm{d}s\\
		&<&  c_1^{-\frac{1}{p}} \int_I V(c_{i}(t_{i}), c_{s, i}(t_{i}))^{\frac{1}{p}}\mathrm{d}s\\
		&\leq& \left(\frac{c_2}{c_1}\right)^{-p} \int_I F(c_i(t_i), c_{s, i}(t_i))\mathrm{d}s.\\
	\end{IEEEeqnarray*}
	This proves universal stabilizability for the case $\alpha = 0$.

	Now consider the case $\alpha(V) = -\lambda V$. On the first time interval, 
	(\ref{alphalambda}) gives 
	\begin{IEEEeqnarray*}{rCl}
		\int_I V(c(t), c_s(t))^{\frac{1}{p}}\mathrm{d}s &<&
		e^{-\frac{\lambda}{p}(t-t_i)} \int_I 
		V(c_i(t_i), c_{s, i}(t_i))^{\frac{1}{p}}\mathrm{d}s.\\
	\end{IEEEeqnarray*}
	On the second time period, we have
	\begin{IEEEeqnarray*}{l}
		\int_I V(c_{i+1}(t), c_{s, i+1}(t))^{\frac{1}{p}}\mathrm{d}s\\
		< e^{-\frac{\lambda}{p}(t-t_{i+1})} \int_I 
		V(c_{i+1}(t_{i+1}), c_{s, i+1}(t_{i+1}))^{\frac{1}{p}}\mathrm{d}s\\
		\leq e^{-\frac{\lambda}{p}(t-t_{i+1})} \int_I 
		V(c_{i}(t_{i+1}), c_{s, i}(t_{i+1}))^{\frac{1}{p}}\mathrm{d}s\\
		\leq e^{-\frac{\lambda}{p}(t-t_i)} \int_I 
		V(c_{i}(t_{i}), c_{s, i}(t_{i}))^{\frac{1}{p}}\mathrm{d}s.\\
	\end{IEEEeqnarray*}
	By induction, it follows that, on any time period,
	\begin{IEEEeqnarray*}{rCl}
		\int_I V(c_{i+j}(t), c_{s, i+j}(t))^{\frac{1}{p}}\mathrm{d}s &<&  
		 e^{-\frac{\lambda}{p}(t-t_i)} \\&& \int_I 
		V(c_{i}(t_{i}), c_{s, i}(t_{i}))^{\frac{1}{p}}\mathrm{d}s.\\
	\end{IEEEeqnarray*}
	This gives, for all $t$,
	\begin{IEEEeqnarray*}{rCl}
		d(x^*(t), x(t)) &\leq& \int_I F(c_{i + j}(t), c_{s, i+j}(t))\mathrm{d}s\\
		&\leq& c_1^{-\frac{1}{p}} \int_I V(c_{i+j}(t), c_{s, i+j}(t))^\frac{1}{p}
		\mathrm{d}s\\
		&<&  c_1^{-\frac{1}{p}} e^{-\frac{\lambda}{p}(t-t_i)} \int_I 
		V(c_{i}(t_{i}), c_{s, i}(t_{i}))^{\frac{1}{p}}\mathrm{d}s\\
		&\leq&\left(\frac{c_2}{c_1}\right)^{-p} e^{-\frac{\lambda}{p}(t-t_i)} 
		\int_I F(c_i(t_i), c_{s, i}(t_i))\mathrm{d}s.\\
	\end{IEEEeqnarray*}
	This proves universal expontential stabilizability for $\alpha(V) =
	-\lambda V$.
\end{proof}

\begin{example}
	Consider the case $\mathcal{M} = \mathbb{R}^n$ with a Finsler-Lyapunov function 
	$V(x, \delta_x) = \delta_x\tran M
	\delta_x$ for some matrix $M$.  Suppose there exist bounds $c_1, c_2$ such
	that, for the Finsler structure $F = \sqrt{\delta_x\tran \delta_x}$, $c_1
	F(x, \delta_x)^2 \leq V(x, \delta_x) \leq c_2 F(x, \delta_x)^2$.  $F$ gives a
	Riemannian structure on $\mathbb{R}^n$.  If $V$ satisfies (\ref{th1}) for
	the system (\ref{dynamics}), (\ref{differential}), $M$ is a control
	contraction metric in the sense of \cite[Th. 1]{Manchester2017}.  In this
	setting, the construction of Section~\ref{CLlambda} is a generalization of 
	the sampled data controller given in \cite{Manchester2017} in two senses.
	Firstly, it provides universal stabilizability under the weaker dissipation
	condition $\alpha = 0$.  Secondly, it does not require computation of
	minimising geodesics - any initial path $c_i$ may be used, and the only
	condition on subsequent paths is that they are no longer than the forward
	image of the previous path (at the same time $t$).  This allows, for example,
	the path to be refined at sample points via a local search for a shorter
	path, without requiring the solution of a global shortest path.
\end{example}

\section{Conclusions}\label{conclusions}
\enlargethispage{-1.2in}
This work generalizes Control Contraction Metrics to Finsler manifolds.  This allows
a larger class of metrics to be used to measure distances between trajectories,
increasing the class of problems to which CCM methods can be applied.

The sampled data controllers constructed in Section~\ref{closed} do no require
computation of shortest paths between points (minimising geodesics).  This allows the
controllers to be applied in cases where minimising geodesics either do not exist or
are too computationally expensive to compute.

Further work remains to be done and will be the subject of future papers.  The open
loop controller constructed in Section~\ref{open} requires the Finsler-Lyapunov
function to be smooth.  This precludes the use of certain desirable metrics.
In
addition, the conditions under which the search for a non-Riemannian CCM is convex
are still to be determined.
\bibliographystyle{IEEEtran}
\bibliography{paper}{}

\begin{thebibliography}{10}
\providecommand{\url}[1]{#1}
\csname url@samestyle\endcsname
\providecommand{\newblock}{\relax}
\providecommand{\bibinfo}[2]{#2}
\providecommand{\BIBentrySTDinterwordspacing}{\spaceskip=0pt\relax}
\providecommand{\BIBentryALTinterwordstretchfactor}{4}
\providecommand{\BIBentryALTinterwordspacing}{\spaceskip=\fontdimen2\font plus
\BIBentryALTinterwordstretchfactor\fontdimen3\font minus
  \fontdimen4\font\relax}
\providecommand{\BIBforeignlanguage}[2]{{%
\expandafter\ifx\csname l@#1\endcsname\relax
\typeout{** WARNING: IEEEtran.bst: No hyphenation pattern has been}%
\typeout{** loaded for the language `#1'. Using the pattern for}%
\typeout{** the default language instead.}%
\else
\language=\csname l@#1\endcsname
\fi
#2}}
\providecommand{\BIBdecl}{\relax}
\BIBdecl

\bibitem{Isidori1995}
A.~Isidori, \emph{{Nonlinear Control Systems}}, 3rd~ed.\hskip 1em plus 0.5em
  minus 0.4em\relax London: Springer-Verlag, 1995.

\bibitem{Krstic1995}
M.~Krsti{\'{c}}, P.~V. Kokotovi{\'{c}}, and I.~Kanellakopoulos,
  \emph{{Nonlinear and Adaptive Control Design}}.\hskip 1em plus 0.5em minus
  0.4em\relax New York: Wiley, 1995.

\bibitem{Sepulchre1997}
R.~Sepulchre, M.~Jankovic, and P.~V. Kokotovi{\'{c}}, \emph{{Constructive
  Nonlinear Control}}, 1st~ed.\hskip 1em plus 0.5em minus 0.4em\relax London:
  Springer-Verlag, 1997.

\bibitem{Kokotovic2001}
P.~V. Kokotovi{\'{c}} and M.~Arcak, ``{Constructive nonlinear control: a
  historical perspective},'' \emph{Automatica}, vol.~37, pp. 637--662, 2001.

\bibitem{Artstein1983}
Z.~Artstein, ``{Stabilization with relaxed controls},'' \emph{Nonlinear
  Analysis}, vol.~7, no.~11, pp. 1163--1173, 1983.

\bibitem{Sontag1983}
E.~D. Sontag, ``{A Lyapunov-Like Characterization of Asymptotic
  Controllability},'' \emph{SIAM Journal of Control and Optimization}, vol.~21,
  no.~3, 1983.

\bibitem{Sontag1989}
------, ``{A 'universal' construction of Artstein's theorem on nonlinear
  stabilization},'' \emph{Systems and Control Letters}, vol.~13, no.~2, pp.
  117--123, 1989.

\bibitem{Rantzer2001}
A.~Rantzer, ``{A dual to Lyapunov's stability theorem},'' \emph{Systems {\&}
  Control Letters}, vol.~42, pp. 161--168, 2001.

\bibitem{Manchester2017}
I.~R. Manchester and J.-J.~E. Slotine, ``{Control Contraction Metrics: Convex
  and Intrinsic Criteria for Nonlinear Feedback Design},'' \emph{IEEE
  Transactions on Automatic Control}, vol.~62, no.~6, pp. 3046--3053, 2017.

\bibitem{Leung2017}
K.~Leung and I.~R. Manchester, ``{Nonlinear Stabilization via Control
  Contraction Metrics: a Pseudospectral Approach for Computing Geodesics},'' in
  \emph{American Control Conference}, 2017, pp. 1284--1289.

\bibitem{manchester2018unifying}
I.~R. Manchester, J.~Z. Tang, and J.-J.~E. Slotine, ``Unifying robot trajectory
  tracking with control contraction metrics,'' in \emph{Robotics
  Research}.\hskip 1em plus 0.5em minus 0.4em\relax Springer, 2018, pp.
  403--418.

\bibitem{Shiromoto2016}
H.~S. Shiromoto and I.~R. Manchester, ``Decentralized nonlinear feedback design
  with separable control contraction metrics,'' in \emph{55th IEEE Conference
  on Decision and Control (CDC)}, Dec 2016, pp. 5551--5556.

\bibitem{Wang2017}
R.~Wang, I.~R. Manchester, and J.~Bao, ``Distributed economic mpc with
  separable control contraction metrics,'' \emph{IEEE Control Systems Letters},
  vol.~1, no.~1, pp. 104--109, July 2017.

\bibitem{Manchester2017a}
I.~R. Manchester and J.-J.~E. Slotine, ``{On Existence of Separable Contraction
  Metrics for Monotone Nonlinear Systems},'' in \emph{IFAC World Congress},
  2017.

\bibitem{singh2017robust}
S.~Singh, A.~Majumdar, J.-J.~E. Slotine, and M.~Pavone, ``Robust online motion
  planning via contraction theory and convex optimization,'' in \emph{Robotics
  and Automation (ICRA), 2017 IEEE International Conference on}.\hskip 1em plus
  0.5em minus 0.4em\relax IEEE, 2017, pp. 5883--5890.

\bibitem{Lohmiller1998}
W.~Lohmiller and J.-J.~E. Slotine, ``{On Contraction Analysis for Non-linear
  Systems},'' \emph{Automatica}, vol.~34, no.~6, pp. 683--696, 1998.

\bibitem{Sepulchre2010}
R.~Sepulchre, A.~Sarlette, and P.~Rouchon, ``{Consensus in non-commutative
  spaces},'' in \emph{49th IEEE Conference on Decision and Control}, 2010, pp.
  6596--6601.

\bibitem{Lewis1949}
D.~C. Lewis, ``{Metric Properties of Differential Equations},'' \emph{American
  Journal of Mathematics}, vol.~71, no.~2, pp. 294--312, 1949.

\bibitem{Sontag2010}
E.~D. Sontag, ``{Contractive systems with inputs},'' \emph{Lecture Notes in
  Control and Information Sciences}, vol. 398, pp. 217--228, 2010.

\bibitem{Aminzare2013}
Z.~Aminzare and E.~D. Sontag, ``{Logarithmic Lipschitz norms and
  diffusion-induced instability},'' \emph{Nonlinear Analysis, Theory, Methods
  and Applications}, vol.~83, pp. 31--49, 2013.

\bibitem{Russo2013}
G.~Russo, M.~di~Bernardo, and E.~D. Sontag, ``{A contraction approach to the
  hierarchical analysis and design of networked systems},'' \emph{IEEE
  Transactions on Automatic Control}, vol.~58, no.~5, pp. 1328--1331, 2013.

\bibitem{Fiore2016}
D.~Fiore, S.~J. Hogan, and M.~di~Bernardo, ``{Contraction analysis of switched
  systems via regularization},'' \emph{Automatica}, vol.~73, pp. 279--288,
  2016.

\bibitem{Margaliot2016}
M.~Margaliot, E.~D. Sontag, and T.~Tuller, ``{Contraction after small
  transients},'' \emph{Automatica}, vol.~67, pp. 178--184, 2016.

\bibitem{Forni2014a}
F.~Forni and R.~Sepulchre, ``{A Differential Lyapunov Framework for Contraction
  Analysis},'' \emph{IEEE Transactions on Automatic Control}, vol.~59, no.~3,
  pp. 614--628, 2014.

\bibitem{Bao2000}
D.~Bao, S.-S. Chern, and Z.~Shen, ``{An Introduction to Riemann-Finsler
  Geometry},'' New York, 2000.

\bibitem{Khalil2002}
H.~K. Khalil, \emph{{Nonlinear Systems}}, 3rd~ed., ser. Pearson
  Education.\hskip 1em plus 0.5em minus 0.4em\relax Prentice Hall, 2002.

\bibitem{hale1969}
J.~K. Hale, \emph{{Ordinary Differential Equations}}, ser. Pure and applied
  mathematics, 21.\hskip 1em plus 0.5em minus 0.4em\relax Wiley-Interscience,
  1969.

\end{thebibliography}

\end{document}